\documentclass[a4paper,twoside]{amsart}
\usepackage{amsmath}
\usepackage{amsfonts}
\usepackage{amssymb}
\usepackage{amsthm}
\usepackage{newlfont}
\usepackage{graphicx}
\usepackage{amscd}

\textwidth 6.25in
\textheight 9in
\topmargin -0.3cm
\leftmargin -3cm
\oddsidemargin=0cm
\evensidemargin=0cm
\hfuzz5pt 
\theoremstyle{plain}
\newtheorem{Th}{Theorem}[section]

\newtheorem{Prop}[Th]{Proposition}
\theoremstyle{definition}

\theoremstyle{remark}
%
\numberwithin{equation}{section}

\newcommand{\DD}{{\mathbb D}}

\newcommand{\ZZ}{{\mathbb Z}}



\newcommand{\bphi}{\boldsymbol{\phi}}


\begin{document}

\title[Non-commutative  $q$-Painlev\'{e} VI equation]
{Non-commutative  $q$-Painlev\'{e} VI equation}

\author{Adam Doliwa}

\address{Faculty of Mathematics and Computer Science, University of Warmia and Mazury in Olsztyn,
ul.~S{\l}oneczna~54, 10-710~Olsztyn, Poland}
\email{doliwa@matman.uwm.edu.pl}
\urladdr{http://wmii.uwm.edu.pl/~doliwa/}

\date{}
\keywords{discrete Painlev\'{e} equations, non-commutative integrable difference equations}
\subjclass[2010]{37K10, 33E30, 39A13}

\begin{abstract} By applying suitable centrality condition to non-commutative non-isospectral lattice modified Gel'fand--Dikii type systems we obtain the corresponding non-autonomous equations. Then we derive non-commutative $q$-discrete Painlev\'{e} VI equation with full range of parameters as the $(2,2)$ similarity reduction of the non-commutative, non-isospectral and non-autonomous lattice modified Korteweg--de~Vries equation. We also comment on the fact that in making the analogous reduction starting from Schwarzian Korteweg-de~Vries equation no such "non-isospectral generalization" is needed.  
\end{abstract}
\maketitle

\section{Introduction}
The $q$-discrete Painlev\'{e} VI equation ($q$-$\mathrm{P_{VI}}$ ) is the following second order system
\begin{align} \label{eq:P-VI-1}
f_n f_{n+1} & = \frac{(g_n + t_n  c_1)(g_n + t_n  c_1^{-1})}
{(g_n + c_2)(g_n + c_2^{-1})}, \qquad t_{n+1} = \lambda t_n\\
\label{eq:P-VI-2}
g_n g_{n+1} & = \frac{(f_{n+1} + t_n \sqrt{\lambda} c_3)(f_{n+1} + t_n \sqrt{\lambda} c_3^{-1})}
{(f_{n+1} + c_4)(f_{n+1} + c_4^{-1})},
\end{align}
where $c_1, c_2, c_3, c_4$ are arbitrary parameters. It was obtained in~\cite{JimboSakai} on the basis of monodromy-preserving deformations of linear $q$-difference systems in close analogy to derivation~\cite{Fuchs} of the differential Painlev\'{e} VI equation. Also the singularity confinement property of $q$-$\mathrm{P_{VI}}$ equation was discussed there, and the continuous limit and special solutions in terms of $q$-hypergeometric functions were given as well. Actually, 
equations~\eqref{eq:P-VI-1}-\eqref{eq:P-VI-2} were written down in~\cite{RamaniGrammaticos} as the so called asymmetric discrete Painlev\'{e}~III equation. Bilinear structure and Schlesinger transforms of $q$-$\mathrm{P_{VI}}$ system were studied in \cite{JSRG}. In classification of Painlev\'{e} equations~\cite{Sakai} in terms of geometry of rational surfaces the $q$-$\mathrm{P_{VI}}$ system corresponds to the action of the affine Weyl group symmetry of type $D_5$. By quantizing the affine Weyl group action it was possible to construct in~\cite{Hasegawa} the quantization of the $q$-$\mathrm{P_{VI}}$ system.

According to Kruskal~\cite{GrammaticosRamani-review} 
Painlev\'{e} equations are located on a borderline between trivial integrability (usually linearisability) and non-integrability. The Painlev\'{e} equations, both differential and difference, have found numerous applications in theoretical physics and mathematics, for example in analysis of corelation functions of two dimensional Ising model \cite{WCTB} or one dimensional Bose gas \cite{JMMS}, two dimensional quantum gravity \cite{BrezinKazakov}, random matrices \cite{TracyWidom}, integrable reductions of the Einstein equations \cite{Tod} or distribution of zeros of Riemann's $\zeta$-function \cite{ForresterOdlyzko}. It is well known that all the six differential Painlev\'{e} equations can be obtained as reductions of partial differential equations~\cite{AblowitzSegur,Adler,CGM}. However, in spite of various successful attempts
\cite{FWN-lB,GRSWC,KNY-qKP,NRGO,KakeiKikuchi,Ormerod,OKQ} still there does not exist such a procedure for all the discrete Painlev\'{e} equations as classified in~\cite{Sakai}. 

Non-commutative versions of integrable maps or discrete systems \cite{FWN-Capel,Kupershmidt,BobSur-nc,Nimmo-NCKP,DF-K,Dol-GD} are of growing interest in mathematical physics. They may be considered as a useful platform to more thorough understanding of integrable quantum or statistical mechanics lattice systems, where the quantum Yang--Baxter equation~\cite{Baxter,QISM} or its multidimensional generalization~\cite{Zamolodchikov} play a role. In studying non-commutative versions of known integrable systems one tries to capture similarities in structures relevant to integrability of the corresponding both classical and quantum models. In analogy to constructing discrete versions of integrable differential equations this can be considered not only as broadening the scope of integrable systems theory or finding new areas of its applications, but also one aims to provide its deeper understanding. Also non-commutative versions of Painlev\'{e} equations have attracted attention recently~\cite{RR-ncPII,BertolaCafasso-CMP,CafassoIglesia}. From the fully non-commutative perspective known quantum Painlev\'{e} equations~\cite{NGR} can be considered as reductions, obtained due to a particular commutation rule consistent with the evolution, what is a dual approach to the standard (canonical) quantization problem understood as a deformation of Poisson algebras. This point of view allowed recently to derive, in the context of integrable discrete models~\cite{Sergeev-ultralocality,DoliwaSergeev-pentagon}, certain important commutation relations starting from the ultralocality principle. 

Recently a non-isospectral non-commutative lattice analog of the modified Gel'fand--Dikii hierarchy was obtained~\cite{Dol-GD} from periodic reductions of the Desargues maps~\cite{Dol-Des} which provide geometric meaning to the non-commutative discrete Kadomtsev--Petviashvili (KP) system~\cite{Nimmo-NCKP}. It was conjectured there, see also~\cite{LeviRagniscoRodriguez},  
that the presence of the non-isospectral factor should provide in the dimensional reduction process additional parameters and would give rise to discrete Painlev\'{e} equations in their full generality. In the present work, following analogous derivation  \cite{Ormerod} from (commutative and isospectral) non-autonomous lattice modified Korteweg--de~Vries (mKdV) equation, we demonstrate such a process on the example of $q\mathrm{-P_{VI}}$ equation on the non-commutative level. 

The paper is constructed as follows. In Section \ref{sec:GD} we obtain novel non-commutative non-isospectral non-autonomous (but with central factors) lattice equations of the modified Gel'fand--Dikii type. Then in Section \ref{sec:P-VI} we impose $(2,2)$ similarity reduction condition on the simplest such equation (with period two) to get the non-commutative $q\mathrm{-P_{VI}}$ equation. In the Appendix we comment on the fact that in making the self-similarity reduction starting from Schwarzian Korteweg-de~Vries (SKdV) equation~\cite{OKQ,OKHQ} no "non-isospectral generalization" is needed. 

Given a function $f$ from $N$ dimensional integer lattice $\ZZ^N$, we write $f_{(i)}(n_1, \dots , n_i, \dots , n_N)$ instead of 
$f(n_1, \dots , n_i+1, \dots , n_N)$, and we skip usually the argument $n=(n_1, n_2, \dots , n_N)$.

\section{Gel'fand--Dikii reductions of non-commutative discrete KP hierarchy} \label{sec:GD}
Consider~\cite{Dol-GD} the chain (labelled by $k\in\ZZ$) of linear equations
\begin{equation} \label{eq:lin-lKPh}
\bphi_{k+1} - \bphi_{k(i)} = \bphi_k u_{i,k}, \qquad i=1,\dots ,N,
\end{equation}
with the coefficients restricted by the compatibility relations~\cite{Dol-YB}
\begin{equation} \label{eq:KP-u-solved}
u_{i,k(j)} = ( u_{i,k} - u_{j,k})^{-1} u_{i,k} ( u_{i,k+1} - u_{j,k+1}), \qquad i\neq j.
\end{equation} 
As a consequence of equations \eqref{eq:KP-u-solved} there exist potentials $r_k$, $k\in\ZZ$, such that $u_{i,k} = r_k^{-1} r_{k(i)}$. 

The quasi-periodic reduction $\bphi_{k+L} = \bphi_k \mu_k$ of the chain, with the the monodromy factors 
$\mu_k\colon \ZZ^N\to\DD^\times$ restricted by the condition $\mu_{k+1} = \mu_{k(i)}$, $i=1,\dots ,N$, results in the constraints 
\begin{equation} \label{eq:u-per}
u_{i,k+L} = \mu_{k}^{-1} u_{i,k}\mu_{k(i)}, \qquad
r_{k+L} = r_k \mu_k.
\end{equation}
These give rise to the matrix linear problem
\begin{equation} \label{eq:Lm-kp-K}
\left(\bphi_1, \bphi_2, \dots , \bphi_L \right)_{(i)}= 
\left(\bphi_1, \bphi_2, \dots , \bphi_L \right)
\left(  \begin{array}{ccccc}  
-r_{1}^{-1} r_{1(i)} & 0  & \cdots & 0 & \mu \\
1 & -r_{2}^{-1} r_{2(i)}  & 0 & \hdots  & 0 \\
0 & 1 & \ddots &  &  \vdots \\
\vdots &  & & -r_{L-1}^{-1} r_{L-1(i)}  & 0 \\
0 & 0 & \ \hdots  & 1 & -r_{L}^{-1} r_{L(i)}  \end{array} \right) ,
\end{equation}
and the corresponding non-isospectral non-commutative discrete Gel'fand--Dikii type equations
\begin{align}
\label{eq:GD-K-mu}
(r_{k(j)}^{-1} - r_{k(i)}^{-1}) r_{k(ij)} & = r_{k+1}^{-1} (r_{k+1(i)} - r_{k+1(j)} ), \quad k=1,\dots ,L-1, \\ \nonumber
(r_{L(j)}^{-1} - r_{L(i)}^{-1}) r_{L(ij)} & = 
\mu^{-1} r_{1}^{-1} (r_{1(i)} - r_{1(j)} ) \mu_{(\sigma)} \qquad i\neq j,
\end{align}
where $\mu = \mu_1$ is an arbitrary function of the single variable $n_\sigma = n_1 + \dots + n_N$.

\begin{Prop}
Assume that the products $\mathcal{U}_{i} = u_{i,1} \dots u_{i, L} \mu_1^{-1} $
belong to the center of the division ring $\DD$, then by direct calculation using equations \eqref{eq:KP-u-solved}, \eqref{eq:u-per} and properties of the factor $\mu$  we show that 
(1) $\mathcal{U}_{i} = u_{i,k} \dots u_{i, k+L-1} \mu_k^{-1}$ for arbitrary index $k$;
(2) $\mathcal{U}_{i}$ is a function of the single variable $n_i$.
\end{Prop}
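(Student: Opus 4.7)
The plan is to establish (1) by a one-step shift followed by induction on $k$, and (2) by a telescoping calculation applied to the product. In both parts the centrality hypothesis enters only at the end, to cancel a conjugation.

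For (1), I would first extract from \eqref{eq:u-per} together with $\mu_{k(i)}=\mu_{k+1}$ the identity
\[
u_{i,k+L}\,\mu_{k+1}^{-1} \;=\; \mu_k^{-1}\,u_{i,k}.
\]
Multiplying on the left by $u_{i,k+1}\cdots u_{i,k+L-1}$ turns the product on the window from $k+1$ to $k+L$ into $u_{i,k+1}\cdots u_{i,k+L-1}\,\mu_k^{-1}\,u_{i,k}$. Under the inductive hypothesis that $\mathcal{U}_i=u_{i,k}\cdots u_{i,k+L-1}\mu_k^{-1}$, the initial segment $u_{i,k+1}\cdots u_{i,k+L-1}\mu_k^{-1}$ equals $u_{i,k}^{-1}\mathcal{U}_i$, and centrality of $\mathcal{U}_i$ allows us to swap this factor with $u_{i,k}$ so that the resulting expression is again $\mathcal{U}_i$. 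This is the window from $k+1$ to $k+L$ version of $\mathcal{U}_i$ and completes the induction.

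For (2), I would substitute \eqref{eq:KP-u-solved}, rewritten as $u_{i,k(j)}=D_k^{-1}u_{i,k}D_{k+1}$ with $D_k:=u_{i,k}-u_{j,k}$, into the definition of $\mathcal{U}_{i(j)}$. The internal pairs $D_{k+1}D_{k+1}^{-1}$ telescope, leaving
\[
\mathcal{U}_{i(j)} \;=\; D_1^{-1}\,u_{i,1}u_{i,2}\cdots u_{i,L}\,D_{L+1}\,\mu_{1(j)}^{-1}.
\]
To simplify the boundary factor $D_{L+1}$ I would use \eqref{eq:u-per} together with the crucial fact that $\mu_{1(i)}=\mu_{1(j)}=\mu_2$ does not depend on the shift direction; this yields $D_{L+1}=\mu_1^{-1}D_1\mu_{1(j)}$, so after cancelling $\mu_{1(j)}\mu_{1(j)}^{-1}$ the right-hand side reduces to $D_1^{-1}\mathcal{U}_i D_1$. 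Centrality of $\mathcal{U}_i$ then gives $\mathcal{U}_{i(j)}=\mathcal{U}_i$ for every $j\neq i$, i.e. $\mathcal{U}_i$ depends only on $n_i$.

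The main obstacle I anticipate is bookkeeping rather than strategy. In the non-commutative setting one must track carefully how the $\mu_k$'s move past the $u$'s and use both defining properties of the monodromy factor — the $i$-independence $\mu_{k+1}=\mu_{k(i)}$ and the quasi-periodic tie $r_{k+L}=r_k\mu_k$ encoded by \eqref{eq:u-per} — in just the right places. Once those substitutions are correctly executed, centrality is used only to collapse a single conjugation in each part, so the underlying argument remains conceptually clean.
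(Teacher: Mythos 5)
Your argument is correct and is precisely the ``direct calculation using equations \eqref{eq:KP-u-solved}, \eqref{eq:u-per} and properties of the factor $\mu$'' that the paper invokes without writing out: the one-step identity $u_{i,k+L}\mu_{k+1}^{-1}=\mu_k^{-1}u_{i,k}$ drives the induction for (1), and the telescoping of the factors $D_k=u_{i,k}-u_{j,k}$ together with $\mu_{1(i)}=\mu_{1(j)}$ reduces (2) to a conjugation by $D_1$ that centrality collapses. No further comment is needed.
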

Under such centrality assumption we may reduce the number of dependent variables (and equations) by one paying the "price" of introducing arbitrary non-autonomous factors $\mathcal{U}_{i}$ replacing in the equations 
\begin{equation}
r_L^{-1} r_{L(i)} = r_{L-1 (i)}^{-1} r_{L-1} \dots r_{1 (i)}^{-1} r_{1} \, \mathcal{U}_{i} \mu, \qquad i=1, \dots ,N.
\end{equation}
In the simplest case $L=2$, when we denote $x=r_1$, this procedure leads to the linear problem
\begin{equation} \label{eq:Lm-mKdV}
\left(\bphi_1, \bphi_2 \right)_{(i)}= 
\left(\bphi_1, \bphi_2 \right)
\left(  \begin{array}{cc}  
-x^{-1} x_{(i)} & \mu \\
1 & -x_{(i)}^{-1} x \, \mathcal{U}_{i} \mu  \end{array} \right) ,
\end{equation}
of the non-isospectral and non-autonomous version of the lattice non-commutative mKdV (or Hirota) 
equations~\cite{BobSur-nc}
\begin{equation} \label{eq:Hirota-SG}
\left( x_{(j)}^{-1}  - x_{(i)}^{-1} \right) x_{(ij)} = 
\left( x_{(i)}^{-1} \mathcal{U}_i - x_{(j)}^{-1} \mathcal{U}_j \right) x \mu , \qquad i\neq j.
\end{equation}

\section{Similarity reduction to $q$-Painleve VI} \label{sec:P-VI}
In this Section we present the reduction procedure from the non-isospectral non-autonomous lattice non-commutative mKdV equations \eqref{eq:Hirota-SG} to a non-commutative version of the 
$q$-$\mathrm{P_{VI}}$ equation written in the form (obtained by suitable rescaling of the dependent variables and the time function)
 In performing the reduction procedure we follow recent work~\cite{Ormerod} where the 
$q\mathrm{-P_{VI}}$ equation \eqref{eq:P-VI-1}-\eqref{eq:P-VI-2} with some constraint imposed on its parameters was obtained. In our work it is the presence of the non-isospectral factor which allows to obtain the full $q\mathrm{-P_{VI}}$ equation.

In equation \eqref{eq:Hirota-SG} we take $N=2$, and we assume that $\mu$ takes values in the center of $\DD$. Let us impose the reduction condition $x_{(1122)} = x$, which is compatible with the equation if 
\begin{equation}
\mathcal{U}_{i(ii)} \mu_{(\sigma \sigma \sigma \sigma)} = \mathcal{U}_{i} \mu, \qquad i=1,2.
\end{equation}
By separation of variables there exist a non-zero central constant $q$ such that
\begin{align}
\mu(n_\sigma) & = \alpha_k q^{n_\sigma}, \qquad \quad k= n_\sigma \quad \mod 4,\\
\mathcal{U}_{i}(n_i) & = \beta_{i,k} q^{-2n_i}, \qquad k= n_i \quad \mod 2, \qquad i=1,2,
\end{align}
for certain non-zero parameters $\alpha_k$, $\beta_{i,k}$ are certain non-zero central parameters.

Supplement the half-period direction $\left( \begin{array}{r} 1 \\ 1 \end{array} \right)$ by a transversal vector 
$\left( \begin{array}{r} 0 \\ 1 \end{array} \right)$ to a basis of the $\ZZ^2$ lattice. 
The corresponding transversal variable $m$ and the half-periodicity variable $p$ 
\begin{equation*}
m\left( \begin{array}{r} 0 \\ 1 \end{array} \right) + p \left( \begin{array}{r} 1 \\ 1 \end{array} \right) =
\left( \begin{array}{r} n_1 \\ n_2 \end{array} \right)
\end{equation*}
can be expressed by the original variables as
\begin{equation}
m = n_2 - n_1 , \qquad p = n_1.
\end{equation}
The evolution variable $n$ will be the double shift in $m$. 

Let us fix $(n_1,n_2)$ and consider the pattern consisting of four points $w^0_n = x(n_1,n_2-1)$,  $w^1_n = x(n_1,n_2)$,  
$w^2_n = x(n_1+1,n_2)$ and  $w^3_n = x(n_1+1,n_2+1)$, see Figure~\ref{fig:pattern}. 
\begin{figure}
\begin{center}
\includegraphics[width=8cm]{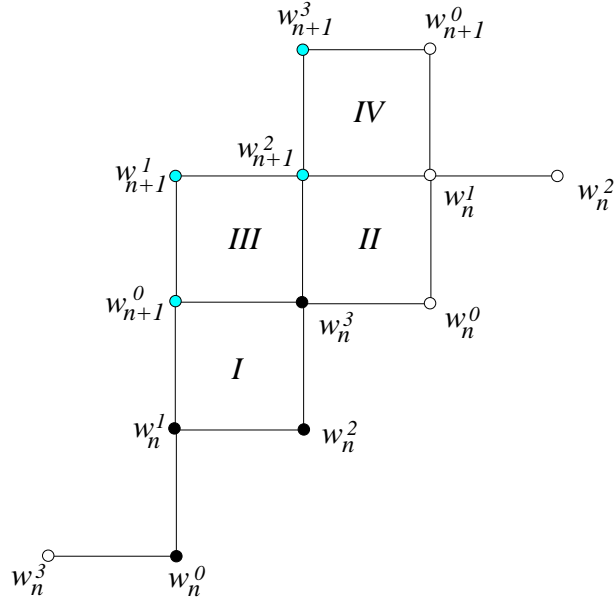}
\end{center}
\caption{The repeating pattern for $q$-$\mathrm{P_{VI}}$ and four elementary quadrilaterals used to construct the system} 
\label{fig:pattern}
\end{figure}
On elementary quadrilateral (plaquette) $I$ with the lower left corner function $x(n_1,n_2)$ we find the relation
\begin{equation}
w^0_{n+1} = (w^3_n + \mathcal{U}_2 \mu w^1_n) ( w^3_n + \mathcal{U}_1 \mu w^1_n)^{-1} w^2_n.
\end{equation}
Similarly, on plaquette $I\!I$ with the lower left corner function $x(n_1+1,n_2+1)$ we have
\begin{equation}
w^2_{n+1} = (w^1_n + \mathcal{U}_{2(2)} \mu_{(\sigma\sigma)} w^3_n) ( w^1_n + \mathcal{U}_{1(1)} 
\mu_{(\sigma\sigma)} w^3_n)^{-1} w^0_n,
\end{equation}
which combined with the previous equation gives
\begin{equation} \label{eq:P6-1}
w^0_{n+1} (w^2_{n+1})^{-1} =  
\frac{ 1 +  \mathcal{U}_2 \mu \, w^1_n (w^3_n)^{-1} }{ 1 +  \mathcal{U}_1 \mu \, w^1_n (w^3_n)^{-1} }
w^2_n (w^0_n)^{-1}  
\frac{ w^1_n (w^3_n)^{-1}  +  \mathcal{U}_{1(1)} \mu_{(\sigma\sigma)} }{  w^1_n (w^3_n)^{-1}  +
(\mathcal{U}_{2(2)} \mu_{(\sigma\sigma)} } .
\end{equation} 
Similar considerations for plaquette $I\!I\!I$ with the lower left corner function $x(n_1,n_2+2)$, and
for plaquette $I\! V $ with the lower left corner function $x(n_1+1,n_2+2)$, give 
\begin{equation} \label{eq:P6-2}
w^1_{n+1} (w^3_{n+1})^{-1} =  
\frac{ 1 +  \mathcal{U}_{2(2)} \mu_{(\sigma)} \, w^0_{n+1} (w^2_{n+1})^{-1} }
{ 1 +  \mathcal{U}_1 \mu_{(\sigma)}  \, w^0_{n+1} (w^2_{n+1})^{-1} }
w^3_n (w^1_n)^{-1}  
\frac{ w^0_{n+1} (w^2_{n+1})^{-1}  +  \mathcal{U}_{1(1)} \mu_{(\sigma\sigma\sigma)} }{  w^0_{n+1} (w^2_{n+1})^{-1}  +
(\mathcal{U}_{2(22)} \mu_{(\sigma\sigma\sigma)} } .
\end{equation} 

Define 
\begin{equation}
f_n = \gamma w^0_n (w^2_n)^{-1} \mathcal{U}_1(n_1) \mu(n_\sigma), \qquad
g_n = \delta w^1_n (w^3_n)^{-1} \mathcal{U}_2(n_2) \mu(n_\sigma),
\end{equation}
where $\gamma$ and $\delta$ are constants to be determined to match the canonical form of $q\mathrm{-P_{VI}}$. Moreover notice, that shift in $n$ by $1$ exchanges 
$\alpha_k$ with $\alpha_{k+2}$ and $\alpha_{k+1}$ with $\alpha_{k+3}$, therefore not to have to introduce "asymmetric" form of 
$q\mathrm{-P_{VI}}$ we need to identify elements of each pair. To obtain the final form of the equation we also assume (without loosing generality) that both $n_1$ and $n_2$ are even. Then for
\begin{equation}
\gamma = \frac{1}{q \alpha_0 \sqrt{\beta_{1,0} \beta_{2,0}}}, \qquad
\delta = \frac{1}{\alpha_0 \sqrt{\beta_{1,0} \beta_{2,1}}}\end{equation} 
we obtain the non-commutative $q$-$\mathrm{P_{VI}}$ system
\begin{align} \label{eq:nc-P-VI-1}
f_{n+1} & = \frac{ g_n + t_n  c_1^{-1} }{g_n + c_2^{-1}} f_n^{-1} \frac{g_n + t_n  c_1} {g_n + c_2} , \qquad 
t_{n+1} = \lambda t_n, \\
\label{eq:nc-P-VI-2}
g_{n+1} & = \frac{f_{n+1} + t_n \sqrt{\lambda} c_3^{-1}}{f_{n+1} + c_4^{-1}}  g_n^{-1} 
\frac{f_{n+1} + t_n \sqrt{\lambda} c_3 }{f_{n+1} + c_4}
\end{align}
with the time-function in the form
\begin{equation}
t_n = t_0 \lambda^n, \qquad \lambda = q^4, \qquad t_0 = \sqrt{\frac{\beta_{1,0} \beta_{1,1} }{ \beta_{2,0} \beta_{2,1} }},
\end{equation}
and the parameters of the equation given by
\begin{equation*}
c_1 = \alpha_0 \sqrt{\beta_{1,1} \beta_{2,0}}, \qquad
c_2  = \alpha_0 \sqrt{\beta_{1,0} \beta_{2,1}}, \qquad
c_3  = \alpha_{1} \sqrt{\beta_{1,1} \beta_{2,1}}, \qquad
c_4  = \alpha_{1} \sqrt{\beta_{1,0} \beta_{2,0}}.
\end{equation*}
The "reverse" transformation depends on two arbitrary parameters $u,v$ (scale factors) and reads
\begin{gather*}
\alpha_0 = u \sqrt{c_1 c_2}, \qquad \alpha_1 = u \sqrt{c_3 c_4},\\
\beta_{1,0} = \frac{1}{uv} \sqrt{\frac{c_2 c_4}{c_1 c_3}}, \qquad \beta_{1,1} = \frac{1}{uv} \sqrt{\frac{c_1 c_3}{c_2 c_4}}, \qquad
\beta_{2,0} = \frac{v}{u} \sqrt{\frac{c_1 c_4}{c_2 c_3}}, \qquad \beta_{2,1} = \frac{v}{u} \sqrt{\frac{c_2 c_3}{c_1 c_4}}. 
\end{gather*}

\section{Concluding remarks}
Making use of the non-isospectral generalization of the non-autonomous lattice mKdV equation in the course of the $(2,2)$ similarity reduction we obtained discrete $q\mathrm{-P_{VI}}$ equation in its full generality. Actually, we proposed a non-commutative version of the equation. In doing that by application of the appropriate centrality condition~\cite{Dol-YB} we first derived appropriate non-commutative generalization of the non-isospectral non-autonomous lattice modified Gel'fand--Dikii type equations~\cite{Dol-GD}. We expect to obtain in a similar way other more complicated $q$-Painleve equations together with their non-commutative analogs.   

\appendix
\section{Disappearance of the non-isospectral factor in transition to the Schwarzian form}
In recent works~\cite{OKQ,OKHQ} similarity reduction of the lattice SKdV equation were studied, and in particular the most general form of the $q\mathrm{-P_{VI}}$ equation has been obtained in \cite{OKHQ} by applying (Moebius) twisted reduction. Remarkably no "non-isospectral generalization" was needed. In this Section we would like to comment on this fact showing how in the transition to the Schwarzian form the non-isospectral factor disappears. Below we discuss the commutative case only. 

\begin{Prop}
Given solution 
\begin{equation}
\left( \begin{array}{cc} \bphi_1 & \bphi_2 \end{array} \right) = 
\left( \begin{array}{cc} \phi_1^0 & \phi_2^0  \\ \phi_1^1 & \phi_2^1 \end{array} \right)
\end{equation}
to the linear problem \eqref{eq:Lm-mKdV} of the non-isospectral non-autonomous lattice mKdV equation
\eqref{eq:Hirota-SG} then the functions $\psi_k = \phi^{1}_k/\phi^0_k$, $k=1,2$ satisfy the following nonlinear equations
\begin{equation}
\frac{(\psi_{(ij)} - \psi_{(j)} )(\psi_{(i)} - \psi )}{(\psi_{(ij)} - \psi_{(i)} )(\psi_{(j)} - \psi )}
= \frac{\mathcal{U}_j -1}{\mathcal{U}_i -1}, \qquad i\neq j,
\end{equation}
known as the lattice non-autonomous SKdV system~\cite{NCQ-SKdV}.
\end{Prop}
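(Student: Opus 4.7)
The plan is to reduce the cross-ratio identity to the transformation law of the ``Wronskian'' determinant
\[ W = \phi^0_1 \phi^1_2 - \phi^0_2 \phi^1_1 \]
of the fundamental solution matrix of \eqref{eq:Lm-mKdV}, in the spirit of the classical passage from mKdV-type to SKdV-type lattice equations via the ratio of two linearly independent solutions.

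First I would compute the elementary differences $\psi_{k(i)}-\psi_k$ directly from the linear system. For $k=1$, substituting the top-row relations $\phi^a_{1(i)}=-\phi^a_1 x_{(i)}/x+\phi^a_2$ into $\psi_{1(i)}=\phi^1_{1(i)}/\phi^0_{1(i)}$ and subtracting $\psi_1$ causes the terms proportional to $x_{(i)}/x$ to cancel, yielding $\psi_{1(i)}-\psi_1 = W/(\phi^0_1\,\phi^0_{1(i)})$; the analogous calculation using the bottom row gives $\psi_{2(i)}-\psi_2=-\mu W/(\phi^0_2\,\phi^0_{2(i)})$. Next I would derive the transformation rule of $W$ itself by expanding $W_{(i)}=\phi^0_{1(i)}\phi^1_{2(i)}-\phi^0_{2(i)}\phi^1_{1(i)}$ with all four components of the linear system: the monomials $\phi^0_1\phi^1_1 x_{(i)}\mu/x$ and $\phi^0_2\phi^1_2 x\mathcal{U}_i\mu/x_{(i)}$ appear with matching signs in the two products and drop out, and collecting the four remaining mixed terms gives
\[ W_{(i)} = \mu\,(\mathcal{U}_i-1)\,W. \]

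Finally I would assemble the cross-ratio. Shifting the elementary-difference formula in direction $j$ produces $\psi_{1(ij)}-\psi_{1(j)}=W_{(j)}/(\phi^0_{1(j)}\,\phi^0_{1(ij)})$ and its $i\leftrightarrow j$ counterpart, after which every $\phi^0$-factor appears exactly once in the numerator and once in the denominator of the cross-ratio and cancels, leaving it equal to $W_{(j)}/W_{(i)}$; substituting the Wronskian rule then yields the desired $(\mathcal{U}_j-1)/(\mathcal{U}_i-1)$. For $k=2$ the argument is identical, the common factor $-\mu$ appearing in each elementary difference cancelling between numerator and denominator, together with the ratio $\mu_{(j)}/\mu_{(i)}=1$ (which holds because $\mu$ depends only on $n_\sigma=n_1+\dots+n_N$). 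I expect the sole real obstacle to be verifying the Wronskian identity $W_{(i)}=\mu(\mathcal{U}_i-1)W$---a routine but somewhat delicate $2\times 2$ determinant expansion relying on the specific form of the matrix in \eqref{eq:Lm-mKdV}---while every other step is clean pairwise cancellation.
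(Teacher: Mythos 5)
Your proposal is correct and follows essentially the same route as the paper: both compute $\psi_{(i)}-\psi$ as the Wronskian $W=\phi_2^1\phi_1^0-\phi_1^1\phi_2^0$ divided by $\phi^0_{(i)}\phi^0$, reduce the cross-ratio to $W_{(j)}/W_{(i)}$ after cancellation of the $\phi^0$ factors, and conclude from the determinant rule $W_{(i)}=\mu(\mathcal{U}_i-1)W$. The only difference is cosmetic: the paper writes out the argument for $\psi_1$ and declares $\psi_2$ analogous, whereas you also track the extra $-\mu$ factor (and its cancellation via $\mu_{(i)}=\mu_{(j)}$) explicitly for $\psi_2$.
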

\begin{proof}
We will demonstrate the result for $\psi = \psi_1$ only, for the second function the reasoning is analogous.
From the linear problem we have
\begin{equation*}
\psi_{(i)} - \psi = \frac{ \phi_2^1 \phi_1^0 - \phi_1^1 \phi_2^0}{\phi_{1(i)}^0 \phi_1^0},
\end{equation*}
which gives 
\begin{equation*}
\frac{(\psi_{(ij)} - \psi_{(j)} )(\psi_{(i)} - \psi )}{(\psi_{(ij)} - \psi_{(i)} )(\psi_{(j)} - \psi )} =
\frac{(\phi_2^1 \phi_1^0 - \phi_1^1 \phi_2^0)_{(j)}}{(\phi_2^1 \phi_1^0 - \phi_1^1 \phi_2^0)_{(i)}}, \qquad i\neq j.
\end{equation*}
After making use of the linear problem once again we obtain
\begin{equation*}
(\phi_2^1 \phi_1^0 - \phi_1^1 \phi_2^0)_{(i)} = \mu (\phi_2^1 \phi_1^0 - \phi_1^1 \phi_2^0) 
(\mathcal{U}_i -1),
\end{equation*}
which implies the statement.
\end{proof}
Therefore the non-isospectral factor is removed from the equations, but it remains "inside" of the corresponding solution $\psi$ of SKdV equation. This fact can be explained on the geometric level of periodic reductions of Desargues maps studied in~\cite{Dol-GD} as follows. The periodicity condition when expressed in homogeneous coordinates $\bphi_k$ is equivalent to proportionality of the corresponding vectors. The functions $\psi$ play the role of non-homogeneous coordinates, where we have strict periodicity.

\section*{Acknowledgments}
The research was initiated during author's work at Institute of Mathematics of the Polish Academy of Sciences.
The paper was supported in part by Polish Ministry of Science and Higher Education grant No.~N~N202~174739.

\bibliographystyle{amsplain}

\providecommand{\bysame}{\leavevmode\hbox to3em{\hrulefill}\thinspace}

\end{document}